\newtheorem{thm}{Theorem}[section]
\newtheorem{lem}[thm]{Lemma}
\newtheorem{prop}[thm]{Proposition}
\newtheorem{claim}{Claim}
\newtheorem{thm-con}[thm]{Theorem-Conjecture}
\numberwithin{equation}{section}
\theoremstyle{definition}
\newtheorem{defn}[thm]{Definition}
\newtheorem{rmk}[thm]{Remark}
\newcommand{\F}{\mathbb F}
\newcommand{\fq}{\F_{\hskip-0.7mm q}}
\newcommand{\cfq}{\overline{\F}_{\hskip-0.7mm q}}
\newcommand{\f}{\Bbb F}
\newcommand{\ev}{\mathrm{ev}}
\newcommand{\Span}{\mathrm{Span}}
\newcommand{\mylabel}[2]
	{\protected@write\@auxout{}{\string\newlabel{#1}{{#2}{\thepage}%
				{\@currentlabelname}{\@currentHref}{}}}}}%
\newcommand{\mylabel}[2]
	{\protected@write\@auxout{}{\string\newlabel{#1}{{#2}{\thepage}}}}}
\definecolor{Amaranto}{rgb}{0.9, 0.17, 0.31}
\definecolor{Borgogna}{rgb}{0.5, 0.0, 0.13}
\newcommand\blfootnote[1]{%
	\begingroup
	\renewcommand\thefootnote{}\footnote{#1}%
	\addtocounter{footnote}{-1}%
	\endgroup
}
\begin{document}
	
	\title{Codes from $A_m$-invariant polynomials}
	
\author[G. Micheli]{Giacomo Micheli}
\address{University of South Florida\\ 4202 E Fowler Ave\\
33620 Tampa, US.}
\email{gmicheli@usf.edu}
	
\author[V. Pallozzi Lavorante]{Vincenzo Pallozzi Lavorante}
\address{University of South Florida \\ 4202 E Fowler Ave\\
33620 Tampa, US.}
\email{vincenzop@usf.edu}

\author[P. Waitkevich]{Phillip Waitkevich}
\address{University of South Florida \\ 4202 E Fowler Ave\\
33620 Tampa, US.}
\email{phillipwaitkevich@usf.edu}

\blfootnote{This project is supported by NSF grant N: 2127742 and 2338424}

		\begin{abstract}
Let $q$ be a prime power.
This paper provides a new class of linear codes that arises from the action of the alternating group on $\mathbb F_q[x_1,\dots,x_m]$ combined with the ideas in  (M. Datta and T. Johnsen, 2022).
 Compared with Generalized Reed-Muller codes with analogous parameters, our codes have the same asymptotic relative distance but a better rate. Our results follow from combinations of  Galois theoretical methods with Weil-type bounds for hypersurfaces.
	\end{abstract}
	\begin{small}
		\maketitle
	{\bf Keywords:} Reed-Muller codes, Alternating group, permutations.
	
	{\bf 2000 MSC:} {11T71, 11T06, 13B05, 20B35}.
	
\end{small}

	\section{Introduction}

Let $q$ be a prime power, $\mathbb F_q$ be the finite field of order $q$, and $m$ be a positive integer. Constructing families of evaluation codes has always attracted a lot of interest due to the numerous applications to coding theory like error correction, DSS and SDMM \cite{d2020gasp,garrison2023class,hollanti2023algebraic,lopez2022secure,Micheli}.

Generalized Reed-Muller codes provide an extension of Reed-Solomon codes to the multivariate ring of polynomials. However, they have good relative distance (distance/length) but poor rate (dimension/length).
Thus, it is interesting to find sub-codes of Generalized Reed-Muller with the same asymptotic relative distance but a better rate.

Along this view, in \cite{datta2023codes}, Datta and Johnsen study a new class of codes that arises from the symmetric group. Such classes of codes have interesting parameters and the structural properties of the symmetric group allow them to derive important properties for the codes, such as the minimum distance or certain weight distribution properties for the generalized Hamming weight. Datta--Johnsen codes are essentially constructed by considering evaluations of linear combinations of elementary symmetric polynomials in a certain number of variables $m$. The minimum distance computation for such codes follows from the special factorization properties that these polynomials have, which in turn is a consequence of the fact that they are invariant under the symmetric group.
Let $A_m$ be the alternating group. This is an interesting general fact: whenever a class of multivariate polynomials in $F=\mathbb F_q[x_1,\dots,x_m]$ is invariant under a group action, then Galois theory over the fraction field of $F$ applies and leads to interesting properties for the factorization of such polynomials. In turn, this allows us to provide bounds for the number of zeroes of these polynomials, and therefore of certain codes constructed from these, as we will show in this paper for the case of $G=A_m$. Apart from providing a new general framework to construct codes from Galois theory, our paper provides advantages over Datta--Johnsen codes (which were already a significant improvement over Reed-Muller codes), since for a fixed $q$ we can construct codes with the same asymptotic rate and same relative distance but double length and dimension. Therefore, when codes are compared for a fixed finite field size, our codes have larger distance because we allow for more evaluation points, and also the message space can be extended (thanks to the fact that we are requiring polynomials to be invariant under a smaller subgroup).
The paper is structured as follows. In Subsection \ref{subsec:linco}, we recap the basic notions from the theory of linear error correcting codes. In Subsection \ref{subsec:points}, we include results that are needed to study the number of points on affine varieties. In Subsection \ref{subsec:lincombination}, we introduce the space of linear combinations of elementary symmetric polynomials and provide some properties from \cite{datta2023codes} that allow us to count the number of zeroes of polynomials in this space. In Subsection \ref{subsec:Galois}, we derive some properties of a certain set of polynomials in Lemma \ref{lm:factor}, that will be useful to determine the message space for our codes.
Section \ref{sec:boundze} is devoted to providing a bound on the number of zeroes of polynomials in our message space: this is done by splitting the proof into the two cases prescribed by Subsection \ref{subsec:linindep} and Subsection \ref{subsec:lindep}.
Finally, Section \ref{sec:constrcode} provides the construction of our codes and comparison with Datta-Johnsen codes and Reed-Muller codes for analogous parameters.
	
	\section{Background}
	\subsection{Linear codes}\label{subsec:linco}
	A code $ C $ of length $ n $ over the finite field $\f_q$ is a subset of $ \f_q^n $. The code $C$ is said to be linear of dimension $k$ if it is a $k$-dimensional $\f_q$-subspace of $\f_q^n$. The weight of an element of $ \f_q^n $ is defined to be the number of its non-zero entries. The \textit{Hamming distance} between two elements $ x, y $ of $ \f_q^n $ is defined to be the weight of $ x - y $. The \textit{minimum distance} $d$ of a code $ C $ is the minimum of distances between all two distinct elements of $ C,$ and
	by an $ [n,k,d]_q $ code we mean a linear code of length $n$, dimension (as a subspace) $k$ and minimum distance $d$.\\
	One may ask whether a code is a ``good" code compared to other constructions, this is why it is useful to introduce the notion of relative distance and  rate of a code. 
	\begin{defn}
		Let $C$ be a $[n,k,d]_q$ code. The relative distance is $\delta:=d/n$ and the rate is defined to be $\rho:=k/n$.
	\end{defn}
	We can compare linear codes for the same length by comparing their relative distance and rate. Codes with higher relative distance and/or rate are better than codes with lower ones.
	Generalized	Reed-Muller codes consist of the evaluation vectors of multivariate polynomials
	over $ \fq $. Let $\fq[x_1,\dots,x_m]$ be the polynomial ring with $m$ variables. The $t$-th order Generalized Reed-Muller code $GR_q(m,t)$ is defined as
	
	\begin{equation}\label{code:rm}
	GR_q(m,t):=\{(f(x) : x \in \fq^m) \mid  f\in \fq[x_1,\dots,x_m], \deg(f)\leq t\}
	\end{equation}

	and it is a $[q^m, \binom{m+t}{m}, (1-\frac{t}{q})q^m]_q$ code, see classic literature \cite{kasami1968new}.
	\

	\subsection{Points on varieties.}\label{subsec:points}
	Let $\cfq$ denote the algebraic
	closure of the field $\fq$. Let $F_1,\dots,F_\ell$ be polynomials in $\fq[x_1,\dots,x_m]$ and let $V$
	denote the affine subvariety of $\mathbb{A}^m(\cfq)$ defined by
	$F_1,\dots,F_\ell$. Counting or estimating the number of
	$\mathbb{F}_q$-rational points $x\in\mathbb{A}^m(\mathbb{F}_q)$ of $V$ is an important
	the subject of mathematics and computer science, with many
	applications. In \cite{cafure2006improved} the authors showed that the number $|V(\fq)|$ of $\mathbb{F}_q$-rational points of an
	$\fq$-absolutely irreducible hypersurface $V$ of $\mathbb{A}^m(\cfq)$
	of degree $\delta>0$ is:
	\begin{equation}\label{estimate:Ghorpade-Lachaud}||V(\fq)|-q^{m-1}| \leq
		(\delta-1)(\delta-2)q^{m-3/2}+5\delta^{13/3}q^{m-2}.
	\end{equation}
	For more details see \cite[Theorem 5.2]{cafure2006improved}.	
	In the next section, we will use this result to bound the number of zeros of certain polynomial equations.
	
	\subsection{The vector space of elementary symmetric polynomials}\label{subsec:lincombination}
	In \cite{datta2023codes} the authors studied the vector space generated by the elementary symmetric polynomials in $m$ variables. We recall here some useful properties that will be needed in the next sections.
	We denote by $\sigma_m^i$ the $i$-th elementary symmetric polynomial in $m$ variables $x_1, \dots, x_m$, i.e.,
	$$\sigma_{m}^i = \sum_{1 \le j_1 < \cdots < j_i \le m} x_{j_1} \cdots x_{j_i}$$
	for $1 \le i \le m$ and $\sigma_m^0 = 1$. 
	The following result is obtained by collecting the results in \cite[Section 2]{datta2023codes}
	\begin{prop}\label{pr:s}
		Let $s \in \fq[x_1, \dots, x_m]$ be given by	$s = a_0 + a_1 \sigma_m^1 + \dots + a_m \sigma_m^m$	where $a_0, \dots, a_m \in \fq$. Then	$s$ is either absolutely irreducible, say of type $1$, or $s = a \displaystyle{\prod_{i=1}^m} (\alpha + x_i)$ for $a,\alpha \in \fq$, say of type $2$.
	\end{prop}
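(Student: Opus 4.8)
The plan is to analyze the polynomial $s = a_0 + a_1\sigma_m^1 + \cdots + a_m\sigma_m^m$ by homogenizing and passing to projective space. First I would introduce an extra variable $x_0$ and consider the homogenization $S = a_0 x_0^m + a_1 x_0^{m-1}\sigma_m^1 + \cdots + a_m \sigma_m^m$, which is exactly the (reversed) generating polynomial $\prod_{i=1}^m(x_0 + \lambda x_i)$-type construction: in fact, by the fundamental relation for elementary symmetric polynomials, $\prod_{i=1}^m(T + x_i) = \sum_{i=0}^m \sigma_m^i T^{m-i}$. So if $a_m \neq 0$, after scaling we may write $S$ as $a_m$ times a polynomial that, viewed as a polynomial in $x_0$ over the field $\cfq(x_1,\dots,x_m)$, is monic of degree $m$. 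The roots of this polynomial (in $x_0$) are governed by a single algebraic element, and the key is to study its factorization.

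Next I would treat the case $a_m = 0$ separately (a degenerate case where the top-degree part drops) — here $s$ has degree $< m$ in an appropriate sense, but one can still run an analogous argument on the highest nonzero $a_i$; alternatively, one observes $s$ is then a polynomial in fewer "effective" symmetric functions and handle it by a direct irreducibility check or reduction. For the main case $a_m \neq 0$, the crucial observation is that $\sum_{i=0}^m a_i T^{m-i}$ (with $T$ a single variable, $a_i \in \fq$) factors over $\cfq$ into linear factors $a_m\prod_{j=1}^m(T - \beta_j)$ with $\beta_j \in \cfq$. Then $S(x_0,\dots,x_m) = a_m \prod_{j=1}^m \big(\text{something}\big)$? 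This is where one must be careful: the substitution $T \mapsto$ (ratio) does not literally factor the multivariate $S$. The correct route is to think of $s+$ constant structure via the \emph{specialization}: the zero set of $s$ in $\cfq^m$ is $\{(x_1,\dots,x_m) : \prod_i(1 + x_i/\alpha)\cdots\}$ — more precisely, since $a_0 + a_1\sigma^1 + \cdots + a_m\sigma^m$ evaluated at a point equals $\prod_j(\text{stuff})$ only when the $a_i$ themselves come from a product. So the real dichotomy is: either the univariate polynomial $p(T) = \sum a_i T^i$ (or its reverse) has a \emph{repeated root pattern forcing it to be a perfect $m$-th power up to scalar}, giving type 2, or it does not.

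Thus the core of the proof should be: consider $s$ as lying in the invariant ring $\fq[x_1,\dots,x_m]^{S_m} = \fq[\sigma_m^1,\dots,\sigma_m^m]$, which is a polynomial ring in the $\sigma$'s; $s$ is then a \emph{linear} (degree-one affine) form in these generators. I would use the fact that $\fq[x_1,\dots,x_m]$ is integral over the invariant ring and apply Galois descent: a factorization of $s$ in $\fq[x_1,\dots,x_m]$ (or over $\cfq$) must be compatible with the $S_m$-action up to permutation of factors. If $s$ is reducible, write $s = g\cdot h$ with $g,h$ nonconstant; by considering degrees and the $S_m$-orbit of this factorization, and using that $s$ is $S_m$-invariant, one shows the irreducible factors of $s$ are permuted transitively and all have equal degree $d \mid m$, each obtained from a single one by the $S_m$-action. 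Examining the degree-$1$-in-each-$x_i$ structure (note $\sigma_m^m = x_1\cdots x_m$ forces $s$ to have degree at most $1$ in each variable when $a_m \neq 0$, i.e. $s$ is multilinear), a nonconstant factor that is part of an $S_m$-orbit and is multilinear and the product is $S_m$-invariant forces each factor to be linear of the form $\alpha + x_i$ — hence $d = 1$ and $s = a\prod_{i=1}^m(\alpha + x_i)$, which is type 2; otherwise $s$ is absolutely irreducible, type 1.

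The main obstacle I anticipate is making the Galois/orbit argument on the factorization fully rigorous over $\cfq$ rather than just over $\fq$ — one needs that the Galois group $\mathrm{Gal}(\cfq/\fq)$ and the geometric monodromy interact correctly, so that "absolutely irreducible or a product of conjugate linear forms" is genuinely the only dichotomy and no intermediate case (e.g. a product of two conjugate quadratics for $m=4$) survives. Resolving this requires the multilinearity constraint (degree $\le 1$ in each $x_i$) together with $S_m$-invariance of the product: any irreducible factor must have its variable-support permuted among $\{x_1,\dots,x_m\}$ by the stabilizer structure, and a short combinatorial argument shows the support must be a single variable. Once that is pinned down, the type-2 form and the scalar $a$, $\alpha \in \fq$ (rationality coming from $s$ being defined over $\fq$ and the symmetry of the factor multiset) follow directly, completing both cases. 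The case $a_m = 0$ will need a small separate treatment but reduces to the same ideas applied to the leading nonzero coefficient, or can be dispatched by noting the resulting polynomial is absolutely irreducible by a direct Eisenstein-type or Jacobian-criterion check.
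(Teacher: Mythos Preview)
The paper does not prove this proposition; it merely records it as ``obtained by collecting the results in \cite[Section 2]{datta2023codes}'', so there is no in-paper argument to compare against.

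As for your sketch itself: after the abandoned homogenization detour, the approach you land on is correct and standard. The point you should make explicit (rather than defer to ``a short combinatorial argument'') is this: every $\sigma_m^i$ is multilinear, so $s$ has degree $\le 1$ in each $x_j$ regardless of whether $a_m\ne 0$; hence in any factorization $s=\prod_j g_j$ over $\cfq$ the irreducible factors have pairwise \emph{disjoint} variable supports (since $\deg_{x_j}(gh)=\deg_{x_j}g+\deg_{x_j}h$ in an integral domain), and these supports form a partition of $\{1,\dots,m\}$ that is stable under $S_m$ (because $S_m$ permutes the $g_j$ up to scalars). The only $S_m$-stable partitions of $\{1,\dots,m\}$ are the two trivial ones, which immediately yields the dichotomy: one block means $s$ is absolutely irreducible; all singletons means $s=a\prod_i(\alpha+x_i)$, with $a=a_m$ and $\alpha=a_{m-1}/a_m$ read off from the top two coefficients, so both lie in $\fq$. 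This same partition observation kills your ``intermediate case'' worry (two quadratics for $m=4$, etc.) and handles $a_m=0$ cleanly (the all-singletons branch would force $\deg s=m$), so no Eisenstein or Jacobian machinery is needed.
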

	
	\begin{rmk}\label{rk:sn-1}
		Note that, given a polynomial $s$ that is a linear combination of elementary symmetric polynomials, by isolating one variable, say $x_1$, we can write $s=x_1 p_1 +p_2$, where $p_1$ and $p_2$ are linear combination of elementary symmetric polynomials in $x_2,\dots,x_m$ (hence invariant under the action of $S_{m-1}$). 
	\end{rmk}
	
	\subsection{Galois theory and $A_m$-invariant polynomials}\label{subsec:Galois}
	Let $A_m$ be the alternating group of $m$ variables, that is the subgroup of $S_m$ of all the even permutations. $A_m$ acts on the set of polynomials $\overline{\F}_q[x_1,\dots,x_m]$ by acting on its variables. More specifically, if $\sigma \in A_m$, then $f(x_1,\dots,x_m)$ is sent to $\sigma(f):=f(x_{\sigma(1)},\dots, x_{\sigma(m)})$
	\begin{defn}
		An $A_m$-invariant polynomial $f\in \overline{\F}_q[x_1,\dots,x_m]$ is a polynomial that is invariant under the action of $A_m$, that is $f=\sigma(f)$ for every $\sigma \in A_m$.
	\end{defn}
	Note that, in particular, any symmetric polynomial is $A_m$-invariant.
	The following result is classical and will be used later in the paper. We include the proof for completeness.
	
	\begin{thm}\label{th:index}
		Let $A_m$ be the alternating group. Then it does not have a proper subgroup of index less than $m$, for $m\geq5$.
	\end{thm}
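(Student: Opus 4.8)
The plan is to combine the simplicity of $A_m$ (valid for $m\ge 5$) with the permutation action on cosets. Suppose, for contradiction, that $H\le A_m$ is a proper subgroup with $k:=[A_m:H]<m$. First I would let $X$ be the set of left cosets of $H$ in $A_m$, so $|X|=k$, and consider the natural left multiplication action of $A_m$ on $X$. This produces a group homomorphism $\phi\colon A_m\to \mathrm{Sym}(X)\cong S_k$.

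Next I would analyze the kernel $N:=\ker\phi$. It is a normal subgroup of $A_m$, and since $A_m$ is simple for $m\ge 5$ (the one classical input I would quote), either $N=\{e\}$ or $N=A_m$. The case $N=A_m$ is impossible: every element of $N$ fixes the point $H\in X$, and the stabilizer of $H$ under left multiplication is exactly $H$ itself, so $N\subseteq H$; thus $N=A_m$ would force $H=A_m$, contradicting that $H$ is proper. Therefore $N=\{e\}$ and $\phi$ is injective.

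Finally I would obtain a numerical contradiction from the injectivity of $\phi$: it gives $|A_m|=m!/2\le |S_k|=k!$. Since $k\le m-1$, we have $k!\le (m-1)!=m!/m\le m!/5<m!/2$, which is absurd. Hence no proper subgroup of index less than $m$ exists, proving the theorem.

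The only genuinely nontrivial ingredient — hence the ``main obstacle'' — is the simplicity of $A_m$ for $m\ge 5$; I would simply cite it as classical. If one preferred to avoid that, one could instead argue directly that a nontrivial $\phi$ is impossible by ruling out the relevant normal subgroups of $A_m$ by hand, but invoking simplicity keeps the argument short and clean; the remaining steps (the coset action, the stabilizer computation, and the order comparison) are all elementary.
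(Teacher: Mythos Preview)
Your argument is correct and is essentially identical to the paper's: both use the left-multiplication action of $A_m$ on the cosets of the putative subgroup to obtain a homomorphism into $S_k$, invoke the simplicity of $A_m$ for $m\ge 5$ to control the kernel, and derive a contradiction from the order inequality $m!/2>k!$ for $k<m$. The only cosmetic difference is the order of the steps (the paper first excludes injectivity by cardinality and then forces the kernel to be all of $A_m$, whereas you first split on the kernel and then compare orders).
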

	\begin{proof}
		Assume $ A_m $	has a subgroup $ G $ of index $ m'<m $. Then the action of $A_m$ on the cosets of $ G $
		gives a homomorphism into $ S_{m'} $. Since $ m\geq5 $, $ m!/2>m'! $, so the homomorphism can't be injective. Since $ A_m $
		is simple, the kernel must be all of $ A_m $. In particular, this means that $ hG=G $ for all $h \in A_m$, which is only possible if $ G=A_m$. Thus, there is no proper subgroup of index less than $ m $.
	\end{proof}
	
	Let \[v_m(x)=\prod_{1\leq i <j\leq m}(x_i-x_j)\] be the Vandermonde polynomial in $m$ variables. $v_m$ is invariant under every even permutation, while every odd permutation results in a change of sign.  This means that $v_m$ is an $A_m$-invariant polynomial that is not symmetric.
	The following is a well-known property of $A_m$-invariant polynomials. We provide a short proof using Galois theory for completeness.
	\begin{lem}\label{lm:factor}
		Let $g \in \overline{\F}_q[x_1,\dots,x_m]$ be an $A_m$-invariant polynomial. Then there exist $s_1, s_2 \in \overline{\F}_q[x_1,\dots,x_m]$ symmetric polynomials such that: \[g=s_1 v_m + s_2,\] for $v_m$ being the Vandermonde polynomial in $m$ variables. Furthermore, the representation is unique.
	\end{lem}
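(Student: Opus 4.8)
The plan is to carry out the classical Galois-theoretic argument over the fraction field of $\overline{\F}_q[x_1,\dots,x_m]$ and then descend from rational to polynomial coefficients. Throughout one assumes the characteristic of $\overline{\F}_q$ is not $2$, which is already implicit in the discussion of $v_m$ above: in characteristic $2$ every transposition fixes $v_m$, so the statement would need to be adjusted. Write $K=\overline{\F}_q(x_1,\dots,x_m)$ and let $L=\overline{\F}_q(\sigma_m^1,\dots,\sigma_m^m)\subseteq K$ be the field of symmetric rational functions. Since $x_1,\dots,x_m$ are exactly the distinct roots of the separable polynomial $\prod_{i=1}^m(T-x_i)\in L[T]$, the extension $K/L$ is Galois with group $S_m$ acting by permuting the variables; in particular, a polynomial that lies in $L$ is fixed by all of $S_m$, hence is a symmetric polynomial.

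First I would identify the subfield $K^{A_m}\subseteq K$ fixed by $A_m$. By the Galois correspondence $[K^{A_m}:L]=[S_m:A_m]=2$. The Vandermonde $v_m$ lies in $K^{A_m}$ but not in $L$, since a transposition sends it to $-v_m\neq v_m$; moreover $v_m^2=\prod_{i<j}(x_i-x_j)^2$ is symmetric (it is the discriminant of $\prod_i(T-x_i)$, a polynomial in the $\sigma_m^i$), so $v_m^2\in L$. Therefore $K^{A_m}=L(v_m)$ with $L$-basis $\{1,v_m\}$, and since the given $g$ is $A_m$-invariant it lies in $K^{A_m}$, so $g=r_1v_m+r_2$ for uniquely determined $r_1,r_2\in L$.

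It then remains to promote $r_1,r_2$ to symmetric polynomials, which I would do as follows. Fix a transposition $\tau$; since $\tau$ fixes $r_1,r_2$ and sends $v_m$ to $-v_m$, applying $\tau$ yields $\tau(g)=-r_1v_m+r_2$, hence
\[
r_2=\tfrac12\bigl(g+\tau(g)\bigr),\qquad r_1\,v_m=\tfrac12\bigl(g-\tau(g)\bigr),
\]
and both right-hand sides are polynomials. The polynomial $r_2$ is fixed by $A_m$ and by $\tau$, hence by $\langle A_m,\tau\rangle=S_m$, so it is a symmetric polynomial; put $s_2:=r_2$. The polynomial $h:=\tfrac12(g-\tau(g))$ is $A_m$-invariant and satisfies $\tau(h)=-h$, so every odd permutation negates it; in particular $h$ vanishes on each hyperplane $x_i=x_j$, so $x_i-x_j\mid h$ for all $i<j$, and as these linear forms are pairwise non-associate in the UFD $\overline{\F}_q[x_1,\dots,x_m]$ we get $v_m\mid h$. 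The quotient $h/v_m$ is fixed by $A_m$ and by $\tau$, hence is a symmetric polynomial; call it $s_1$ (necessarily $s_1=r_1$). Then $g=s_1v_m+s_2$, as required.

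For uniqueness I would reuse the $L$-linear independence of $\{1,v_m\}$ in $K^{A_m}$: if $s_1v_m+s_2=s_1'v_m+s_2'$ with all four symmetric, then $(s_1-s_1')v_m=s_2'-s_2$ forces $s_1=s_1'$ and then $s_2=s_2'$ (equivalently, the left side is negated by $\tau$ while the right side is fixed, so both vanish since the characteristic is not $2$). The main obstacle — really the only step that is not a formality — is the descent in the third paragraph, i.e.\ seeing that the a priori rational function $r_1$ is genuinely a polynomial; this rests entirely on the classical divisibility-by-$v_m$ property of alternating polynomials, while everything else is just the Galois correspondence applied to $A_m\le S_m$.
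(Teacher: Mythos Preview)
Your proof is correct and follows essentially the same Galois-theoretic route as the paper: identify $K^{A_m}=L(v_m)$ via the index-$2$ correspondence, write $g=r_1v_m+r_2$ with $r_i\in L$, and then descend to polynomial coefficients. The only difference is in the descent step, where the paper writes $r_1,r_2$ as reduced fractions of symmetric polynomials and shows the denominators must be units, whereas you use the cleaner averaging formulas $r_2=\tfrac12(g+\tau(g))$ and $r_1v_m=\tfrac12(g-\tau(g))$ together with the classical fact that alternating polynomials are divisible by $v_m$.
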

	\begin{proof}
		We know that $[\overline{\F}_q(x_1,\dots,x_m)^{A_m}:\overline{\F}_q(x_1,\dots,x_m)^{S_m}]=2$ since the index of $A_m$ in $S_m$ is $2$, where $\overline{\F}_q(x_1,\dots,x_m)^{A_m}$ and $\overline{\F}_q(x_1,\dots,x_m)^{S_m}$ are the fixed fields of $A_m$ and $S_m$ respectively. Thus, by the fundamental theorem of Galois Theory (and the fact that every polynomial that is invariant under the symmetric group is an algebraic combination of elementary symmetric polynomials), the field of rational functions invariant under $A_m$ can be written as $\overline{\F}_q(\sigma_m^1,\dots,\sigma_m^m,v_m)$, where $\sigma_m^1,\dots,\sigma_m^m$ are the elementary symmetric polynomials in $m$ variables and $v_m$ is the Vandermonde polynomial in $m$ variables.
		In particular any $A_m$-invariant polynomial $h\in \overline{\F}_q[x_1,\dots,x_m]$ uniquely decomposes as follows:
		$h(x)=\frac{p_1}{p_2}+v_m\frac{p_3}{p_4}$, 
		for $p_1,p_2,p_3,p_4$ being symmetric polynomials and $\gcd(p_1,p_2)=\gcd(p_3,p_4)=1$. This means that 
		\[h(x)=\frac{p_1p_4+v_mp_3p_2}{p_2p_4}.\]
		Since $p_2p_4$ is symmetric and for an odd permutation $\sigma$ we have $\sigma(p_1p_4+v_mp_3p_2)=p_1p_4-v_mp_3p_2$ (because $v_m$ is simply the square root of the discriminant in $T$ of $\prod^m_{i=1}(T-x_i)$), we get that $p_2p_4 | p_1p_4$ and $p_2p_4 | v_m p_3p_2$. Thus $p_2 |p_1$ and $p_4 | p_3$, prove that the rational functions are polynomials (note that $p_4 \nmid v_m$ by the definition of the decomposition).
	\end{proof}
	
	\begin{rmk}\label{rk:schur}
	It is well known that the set of degree $d$ Schur polynomials in $m$ variables are a linear basis (over $ \fq $) for the space of homogeneous degree $d$ symmetric polynomials in $m$ variables. This implies that every symmetric polynomial is a sum of homogeneous symmetric polynomials. Thus, if in the decomposition of Lemma \ref{lm:factor} the symmetric polynomial $s_1$ is different from $0$, then $g$ must have a total degree at least $\binom{m}{2}$. Let $s_1=p + \tilde{s}_1$ where $p\ne 0$ is the leading degree homogeneous polynomial, then $v_m p$ is a homogeneous alternating polynomial of degree $\deg(p)+\binom{m}{2}$ and it cannot be canceled with any term of $s_2$. 
	\end{rmk}

	\section{Bound for the number of zeros}\label{sec:boundze}
	Let $x=(x_1,\dots,x_m) \in \mathbb{A}^m(\fq)$, for $q$ odd. Consider the following polynomial:
	\begin{equation}\label{eq:F(x)}
		F(x):=s_1(x)v_m(x)+s_2(x) \in \fq[x],
	\end{equation}
	for $s_1(x)$ and $s_2(x)$ being linear combinations of elementary symmetric polynomials and $v_m$ being the Vandermonde polynomial in $m$ variables.
	\begin{rmk}\label{rk:s1es2}
		Note that $s_1$ and $s_2$ are either linearly dependent or they cannot share any common components. In fact by {\rm Proposition \ref{pr:s}}, $s_1$ and $s_2$ are either both absolutely irreducible or both of type $2$. Thus if they are both of type $2$ and they share one component, they need to be $\fq$-linearly dependent, i.e. scalar multiples, (simply because sharing a factor ensures that they share all factors).
	\end{rmk}
	We are interested in computing the number of zeros of a polynomial of the form \eqref{eq:F(x)}. More specifically we want to compute the number $|Z_D(F)|$ of the distinguished zeros of $F$, where a point $(a_1, \dots, a_m) \in \mathbb{A}^m (\fq)$ is said to be \textit{distinguished} if $a_i \neq a_j$ whenever $i \neq j$. Note that the set of distinguished points of $\mathbb{A}^m(\fq)$, say $\mathbb{A}^m_D(\fq)$, has cardinality $|\mathbb{A}^m_D(\fq)|=P(q,m)$, where 
	$$
	P(q,m)=
	\begin{cases}
		{q \choose m} m! \ \ \ \text{if} \ \ m \le q,\\
		0 \ \ \ \ \ \text{otherwise}.
	\end{cases}
	$$
	We now state the main theorem of this paper that will allow us to give a lower bound for the distance of our codes.
	\begin{thm}\label{th:main}
		Let $F$ be a polynomial as in equation \eqref{eq:F(x)} and let $d:=\gcd(\binom{m}{2},q-1)$. Then for $q\geq m^{10}$ and $m\geq6$ we have \[ |Z_{D}(F)| \leq \frac{P(q,m)}{q-1}d+ mP(q-1,m-1). \]
		
	\end{thm}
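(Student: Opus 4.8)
The plan is to split the analysis according to Remark \ref{rk:s1es2}: either $s_1$ and $s_2$ are $\fq$-linearly dependent, or they share no common component. These two cases should be handled in the two subsections announced in the introduction (Subsections \ref{subsec:linindep} and \ref{subsec:lindep}), and the bound in the statement is the maximum of the two resulting estimates.

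First I would treat the \emph{linearly dependent} case. If $s_2 = \lambda s_1$ for some $\lambda \in \fq$ (or $s_1 = 0$), then $F = s_1(v_m + \lambda)$, so a distinguished zero of $F$ is a distinguished zero either of $s_1$ or of $v_m + \lambda$. Since we only count distinguished points, $v_m$ never vanishes there, so $v_m + \lambda = 0$ cuts out a fiber of the map $v_m \colon \mathbb{A}^m_D(\fq) \to \fq^\times$. The key observation is that $v_m$ is $A_m$-invariant and changes sign under odd permutations, so its value on a distinguished point depends only on the orbit structure under $S_m$, and the multiset of values taken is governed by the $d$-th power structure in $\fq^\times$ where $d = \gcd(\binom{m}{2}, q-1)$; this is exactly where the factor $\frac{P(q,m)}{q-1}d$ comes from — the nonzero value $v_m^2 = \mathrm{disc}$ ranges over cosets and one counts how many distinguished points map into the relevant fiber. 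The remaining contribution, the distinguished zeros of $s_1$, is bounded by first applying Remark \ref{rk:sn-1} to write $s_1 = x_1 p_1 + p_2$ with $p_1, p_2$ symmetric in $x_2, \dots, x_m$, and then bounding distinguished zeros by fixing all but one coordinate: this yields a bound of the shape $m \cdot P(q-1, m-1)$ (for each choice of which variable is ``free'' and each choice of the remaining $m-1$ distinct values, $s_1$ restricts to a nonzero univariate polynomial of bounded degree, unless it vanishes identically, a case controlled by Proposition \ref{pr:s}).

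Next I would treat the case where $s_1$ and $s_2$ share no common component. Here $F = s_1 v_m + s_2$ has degree $\binom{m}{2} + \deg(s_1) \le \binom{m}{2} + m$ roughly, and the strategy is to show $F$ is absolutely irreducible — or at least has a bounded number of absolutely irreducible factors, none of which is a ``bad'' low-degree factor — and then apply the Cafure–Matera estimate \eqref{estimate:Ghorpade-Lachaud}. Absolute irreducibility should follow from Lemma \ref{lm:factor} together with Remark \ref{rk:schur}: any factorization of $F$ would have to be compatible with the $A_m$-action, and since $s_1 \ne 0$ forces the leading homogeneous piece $v_m p$ (with $p$ the top form of $s_1$) to appear, the Vandermonde factor $v_m$ — which itself factors as $\prod_{i<j}(x_i - x_j)$ — cannot be ``absorbed'' without creating a common component of $s_1$ and $s_2$, contradiction. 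Once absolute irreducibility (of degree $\delta \le \binom{m}{2} + m$) is in hand, \eqref{estimate:Ghorpade-Lachaud} gives $|Z_D(F)| \le |V(\fq)| \le q^{m-1} + (\delta-1)(\delta-2)q^{m-3/2} + 5\delta^{13/3}q^{m-2}$, and the hypotheses $q \ge m^{10}$, $m \ge 6$ make the error terms small enough that this is dominated by $\frac{P(q,m)}{q-1}d + mP(q-1,m-1)$, since $\frac{P(q,m)}{q-1} \sim q^{m-1}$ and $d \ge 1$.

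The main obstacle will be the irreducibility argument in the second case: one must rule out \emph{all} ways that $F = s_1 v_m + s_2$ could factor over $\cfq$, and the Galois-theoretic input (Lemma \ref{lm:factor}) is about $A_m$-invariant polynomials, whereas $F$ itself need not be invariant — so the argument has to pass through the symmetrization $F \cdot \sigma(F)$ (an $S_m$-invariant, hence symmetric, polynomial) or otherwise exploit that any factor's $A_m$-orbit must be bounded, and this is precisely where Theorem \ref{th:index} (no proper subgroup of $A_m$ of index $< m$) enters to force an orbit to be either trivial or of size $\ge m$, pinning down the factorization type. A secondary technical point is getting the \emph{exact} constant $d = \gcd(\binom{m}{2}, q-1)$ right in the dependent case: this requires carefully counting, among distinguished points, the size of the fiber $\{x : v_m(x) = c\}$ as $c$ ranges over a coset of the $d$-th powers, using that $v_m$ restricted to $\mathbb{A}^m_D(\fq)$ has image a union of $(q-1)/?$ cosets and that each fiber of $\mathrm{disc}$ splits evenly — a computation I would organize as a separate lemma before proving Theorem \ref{th:main}.
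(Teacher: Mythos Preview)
Your overall plan is right and matches the paper's: split into the linearly dependent and independent cases, handle the first by factoring $F = s_1(v_m + \lambda)$ and bounding the fiber of $v_m$ and $Z_D(s_1)$ separately, handle the second via absolute irreducibility plus the Cafure--Matera estimate, then check that the dependent-case bound dominates for $q \geq m^{10}$, $m\ge 6$.

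There is, however, one factual error and one genuine gap.

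First, you write that ``$F$ itself need not be invariant'' under $A_m$. This is false: $s_1, s_2$ are symmetric (hence $A_m$-invariant) and $v_m$ is $A_m$-invariant, so $F = s_1 v_m + s_2$ is always $A_m$-invariant. The paper uses this directly: if $g$ is an absolutely irreducible factor of $F$, then $\sigma(g)$ divides $F$ for every $\sigma \in A_m$, and one studies the stabilizer $G \leq A_m$ of $g$. No passage to $F\cdot\sigma(F)$ is needed.

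Second, your sketch of the irreducibility argument (``$v_m$ cannot be absorbed without creating a common component'') is much too thin. The paper's argument splits on whether $G = A_m$ or $G < A_m$. The first case uses Lemma~\ref{lm:factor} on both $g$ and its cofactor together with the degree constraint from Remark~\ref{rk:schur} to force a common factor of $s_1,s_2$. The second case is the delicate one: Theorem~\ref{th:index} gives $[A_m:G]\ge m$, and a separate claim (treating in turn the cases where $g$ omits $0$, $1$, or $\ge 2$ variables) shows $\deg_{x_i}(g)\le 1$ for all $i$. One then reduces the relation $s_1 v_m \equiv -s_2 \pmod g$ by solving $x_1 \equiv h_1/h_2$ in $\cfq[x_2,\dots,x_m]$, clears denominators, and obtains a contradiction from a divisibility and total-degree comparison. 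None of this is captured by a leading-term heuristic.

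One smaller point of difference: in the independent case the paper does not simply bound $|Z_D(F)| \le |Z(F)|$. It uses the exact identity
\[
|Z_D(F)| \;=\; |Z(F)| - |Z(s_2)| + |Z_D(s_2)|,
\]
valid because the non-distinguished zeros of $F$ are precisely those of $s_2$, and applies \eqref{estimate:Ghorpade-Lachaud} to \emph{both} $F$ (from above) and $s_2$ (from below) so that the $q^{m-1}$ main terms cancel. This leaves a bound of order $m^4 q^{m-3/2}$ plus $|Z_D(s_2)|\le mP(q-1,m-1)$. Your cruder bound $|Z_D(F)|\le q^{m-1}+O(m^4 q^{m-3/2})$ can still be seen to sit below $\frac{P(q,m)}{q-1}+mP(q-1,m-1)$ for $q\ge m^{10}$ and $m\ge 6$, but only because the second summand on the right is itself of order $m\,q^{m-1}$; the paper's cancellation makes the comparison immediate and transparent.
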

	
	In the following, we will distinguish when $s_1$ and $s_2$ are linearly dependent or not and treat those two cases separately.
	
	\subsection{Linearly independent case}\label{subsec:linindep}
	
	The set of distinguished zeros of $F$ can be computed as follows. Let $Z(f)$, $Z_D(f)$ and $Z_{ND}(f)$ be the set of zeros, \textit{distinguished} zeros and \textit{non-distinguished} zeros of a polynomial $f$, respectively. Since for every non-distinguished zero $z=(z_1,\dots,z_m)$ of $F$ we have $v_m(z)=0$, the following holds:
	\begin{equation}\label{eq:bound}
		\begin{aligned}
			|Z_D(F)| =|Z(F)|-|Z_{ND}(F)|=|Z(F)|-|Z_{ND}(s_2)|=|Z(F)|-\left(|Z(s_2)|-|Z_D(s_2)| \right),
		\end{aligned}
	\end{equation}  
	that is $	|Z_D(F)|= |Z(F)|-|Z(s_2)|+|Z_D(s_2)|$.
	
	\begin{lem}
		Let $m \geq 6$. If $s_1$ and $s_2$ are linearly independent, the polynomial $F$ defined by equation \eqref{eq:F(x)} is absolutely irreducible.
	\end{lem}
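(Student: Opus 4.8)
The plan is to work over $\overline{\F}_q$ and argue by contradiction: a nontrivial factorization of $F$ will be made to collide with Remark~\ref{rk:s1es2}. The three tools are the $A_m$-action (it permutes the irreducible factors of $F$, because $F=s_1v_m+s_2$ is $A_m$-invariant), Theorem~\ref{th:index} (every orbit of factors has size $1$ or $\ge m$), and Lemma~\ref{lm:factor} together with Remark~\ref{rk:schur} (an $A_m$-invariant polynomial is either symmetric or has total degree $\ge\binom m2$).

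First I would isolate the leading (top-degree homogeneous) form of $F$. Since $s_1,s_2$ are linearly independent they are both nonzero; put $d_1:=\deg s_1\le m$ and let $a_{d_1}\ne 0$ be the coefficient of $\sigma_m^{d_1}$ in $s_1$, so the leading form of $s_1$ is $a_{d_1}\sigma_m^{d_1}$. Because $\deg s_2\le m<\binom m2\le d_1+\binom m2$ for $m\ge 6$, no term of $s_2$ can interfere, and the leading form of $F$ is $L_F=a_{d_1}\sigma_m^{d_1}v_m$. I need two facts: (i) $\gcd(\sigma_m^{d_1},v_m)=1$, which follows because no $x_i-x_j$ divides $\sigma_m^{d_1}$ (set $x_i=x_j$); and (ii) $L_F$ is squarefree, for which it is enough that $\sigma_m^{d_1}$ is squarefree. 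For $d_1\in\{0,m\}$ this is clear, and for $1\le d_1\le m-1$ I would prove $\sigma_m^{d_1}$ is in fact \emph{irreducible} in $\overline{\F}_q[x_1,\dots,x_m]$ by a preliminary instance of the main argument: an $A_m$-orbit of irreducible factors of $\sigma_m^{d_1}$ of size $\ge m$ is impossible since $\deg\sigma_m^{d_1}=d_1<m$, hence every irreducible factor of $\sigma_m^{d_1}$ is $A_m$-invariant and therefore symmetric by Lemma~\ref{lm:factor} (its degree is $<\binom m2$); but then the whole factorization lives in the polynomial ring $\overline{\F}_q[\sigma_m^1,\dots,\sigma_m^m]$, in which $\sigma_m^{d_1}$ is a variable and hence irreducible. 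With (ii) in hand, $L_F$ squarefree forces $F$ itself to be squarefree, since a repeated irreducible factor $P$ of $F$ would produce a repeated factor $L_P$ of $L_F$; in particular $v_m$ divides $L_F$ exactly once, by (i).

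Now assume $F$ is reducible, let $A_m$ act on the set $\mathcal F$ of distinct irreducible factors of $F$, and let $g_1,\dots,g_r$ be the orbit products; each $g_j$ is $A_m$-invariant and $F$ is a scalar multiple of $\prod_jg_j$. If $r\ge 2$, each $g_j$ is nonconstant and, by Lemma~\ref{lm:factor}, $g_j=\alpha_jv_m+\beta_j$ with $\alpha_j,\beta_j$ symmetric; if some $\alpha_j=0$ then $g_j$ is a nonconstant symmetric factor of $F$ and we pass to the final step, while if every $\alpha_j\ne 0$ then $v_m\mid L_{g_j}$ for all $j$ (Remark~\ref{rk:schur}), so $v_m^r\mid L_F$, contradicting that $v_m$ divides $L_F$ exactly once. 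If $r=1$, then up to a scalar $F=\prod_{P\in\mathcal F}P$ with $|\mathcal F|\ge 2$, hence $|\mathcal F|\ge m$ by Theorem~\ref{th:index}; all $P\in\mathcal F$ share a common degree $e\ge1$, and their leading forms $\{L_P\}_{P\in\mathcal F}$ are pairwise coprime, nonconstant, and partition the irreducible factors of $L_F=a_{d_1}\sigma_m^{d_1}v_m$ into $|\mathcal F|$ blocks on which $A_m$ acts transitively. For $1\le d_1\le m-1$ the irreducible factor $\sigma_m^{d_1}$ is $A_m$-fixed, so its block is $A_m$-fixed — impossible for a transitive action on $\ge 2$ blocks. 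This leaves the two genuinely combinatorial cases. If $d_1=m$, so $\sigma_m^m=\prod_ix_i$: every block must meet $\{x_1,\dots,x_m\}$ (otherwise $L_F$ would fail to be divisible by some $x_i$), which forces $|\mathcal F|=m$, each block meeting $\{x_1,\dots,x_m\}$ in exactly one point, hence $e=(m+1)/2$ and $m$ odd; but then a block is $\{x_1\}$ together with an $A_{m-1}$-invariant set of $(m-1)/2$ two-element subsets of $\{1,\dots,m\}$, and one checks from the orbits of $A_{m-1}$ on two-element subsets that no such set of that size exists when $m\ge 6$. If $d_1=0$, so $L_F=a_0v_m$: primitivity of the $A_m$-action on two-element subsets forces $e=1$, i.e.\ every irreducible factor of $F$ is a binomial $x_a-x_b+\lambda_{ab}$ ranging over all pairs; comparing the degree $\binom m2-1$ components of $F$ in its two expressions (the product above and $a_0v_m+s_2$) and using the linear independence of the polynomials $v_m/(x_a-x_b)$ forces every $\lambda_{ab}=0$, so $F$ is a scalar multiple of $v_m$ and hence $s_2=0$, contradicting linear independence.

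Finally, whenever the argument has produced a nonconstant symmetric $g\mid F$, I would finish as follows: $F/g$ is $A_m$-invariant, so $F/g=\mu_1v_m+\mu_2$ with $\mu_1,\mu_2$ symmetric by Lemma~\ref{lm:factor}; equating with $F=s_1v_m+s_2$ and invoking the uniqueness in Lemma~\ref{lm:factor} gives $s_1=g\mu_1$ and $s_2=g\mu_2$, so the nonconstant $g$ divides both $s_1$ and $s_2$, contradicting Remark~\ref{rk:s1es2}. Thus $F$ is absolutely irreducible. The main obstacle is the case $r=1$, and within it the two sub-cases $d_1\in\{0,m\}$: there the $A_m$-action on the factors of $F$ does not hand us a proper invariant subfactor to feed into the final step, and one must instead exploit the detailed permutation-group structure — primitivity and point-stabilizer orbits — of $A_m$ acting on the two-element subsets of $\{1,\dots,m\}$.
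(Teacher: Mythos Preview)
Your argument is correct but follows a genuinely different route from the paper's. The paper fixes one irreducible factor $g$ of $F$ and splits on whether $\mathrm{Stab}_{A_m}(g)$ equals $A_m$ or is proper; in the proper case it first proves a delicate claim that $\deg_{x_i}(g)\le 1$ for every $i$ via orbit--stabilizer counts, and then eliminates that case by an explicit rational-function computation in $\overline{\F}_q(x_2,\dots,x_m)$ after reducing modulo $g$. You instead pass to the top homogeneous form $L_F=a_{d_1}\sigma_m^{d_1}v_m$, use its squarefreeness to make $F$ squarefree, and let $A_m$ act on all irreducible factors at once: either there are at least two orbit products (and a degree count forces a nonconstant symmetric factor, finished exactly as in the paper's $G=A_m$ case), or the action is transitive and the leading forms induce an $A_m$-block system on the irreducible factors of $L_F$, which you rule out by a short case analysis on $d_1$. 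Your approach trades the paper's explicit identity \eqref{eq:maineqlemma} for block-system combinatorics (primitivity of $A_m$ on $2$-subsets, the $A_{m-1}$-orbit structure on pairs), and as a bonus avoids the $\deg_{x_i}(g)\le 1$ claim entirely; the price is the auxiliary irreducibility of $\sigma_m^{d_1}$ and the two permutation-group facts in the $d_1\in\{0,m\}$ subcases. One small imprecision: in your $r\ge 2$ branch, Remark~\ref{rk:schur} does not literally give $v_m\mid L_{g_j}$, only $\deg g_j\ge\binom m2$; but that already yields $\deg F\ge r\binom m2>m+\binom m2\ge\deg F$, so the contradiction survives.
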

	\begin{proof}
		Let $g \in \cfq[x_1,\dots,x_m]$ be a divisor of $F$. We may suppose $g$ is absolutely irreducible.  Since $F$ is not symmetric (or otherwise $s_1=0$, and $s_1,s_2$ are linearly dependent), then it cannot split only into symmetric irreducible factors, hence we may assume that $g$ is not symmetric.
		
		Since $F$ is stabilized by the alternating group $A_m$, any polynomial $\sigma(g)$, for $\sigma \in A_m$, is a factor of $F$. Let $G$ be the stabilizer of $g$ in $A_m$. We have two cases:
		\begin{itemize}
			\item $G= A_m$. In this case, $g$ is fixed by $A_m$. By Lemma \ref{lm:factor} $g$ can be written as $g=r_1 v_m + r_2$ where $r_1$ and $r_2$ are symmetric polynomials. Thus we have \begin{equation} \label{eq:uq1}
				(r_1 v_m + r_2)\ell = s_1 v_m +s_2,
			\end{equation} 
			for $\ell \in \overline{\F}_q[x_1,\dots,x_m]$. 	Since $g$ and $F$ are $A_m$-invariant, then $\ell$ is also stabilized by $A_m$, and we can write $\ell=t_1 v_m +t_2$ for $t_1,t_2$ symmetric polynomials.
			Note that $r_1\ne0$ since $g$ is not symmetric. Moreover, $r_2\ne0$ since $g$ is irreducible. Finally, $t_2\ne 0$ since $v_m \nmid F$ (or otherwise $s_2=0$, denying the linear independence). In particular since $r_1\ne 0$ Remark \ref{rk:schur} implies that $\deg(g)\geq \binom{m}{2}$.

	The latter forces $\deg(\ell) \leq m$ (since $\deg(F)\leq \binom{m}{2}+m$) and in turn, $t_1=0$, by Remark \ref{rk:schur}.

			Hence $\ell$ is symmetric. By the uniqueness of the representation applied to the equation \eqref{eq:uq1}, we obtain that $r_1 \ell =s_1$ and $r_2\ell =s_2$, which is in contradiction with $s_1$ and $s_2$ being linearly independent (from Remark \ref{rk:s1es2}).
			\item  $G < A_m$. 
			
			\begin{claim}\label{cl:orbit}
	       Let $m\geq 6$. Then $\deg_{x_i}(g)\leq 1$ for all $i \in \{1,\dots,m\}$. 
			\end{claim}
			
			\begin{proof}[Proof of claim \ref{cl:orbit}]By Theorem \ref{th:index}, $G$ has index at least $m$.
				This means that the orbit of $g$ under the action of $A_m$ has cardinality at least $m$, by the orbit-stabilizer theorem. Consider now the degree of $F$ in the variable $x_i$, say $\deg_{x_i}(F)$. We have that $\deg_{x_i}(F)\leq m$.
				If every variable appears in $g$, then it must be that $\deg_{x_i}(g)\leq 1$; in fact, each factor in the product obtained by acting $A_m$ on $g$ contains all the variables and we have at least $m$ factors.
				Let $g$ be without exactly one variable, say $i^*$. Every element in $G$ must be in $St_{A_m}(i^*) \simeq A_{m-1}$, where $St_{A_m}(i^*)$ is the stabilizer of $i^*$ in $A_m$. By applying again Theorem \ref{th:index}, we derive that $G$ has index at least $m-1$ in $A_{m-1}$ which implies that the index of $G$ in $A_m$ is at least $m(m-1)$. Let $1$ be the index such that $\deg_{x_1}(g)\geq2$.
				Note that $\deg_{x_1}(F)\geq[St_{A_m}(1):St_G(1)]\deg_{x_1}(g)$, because $F$ is invariant under $A_m$ and different representations of the cosets of $St_G(1)$ in $St_{A_m}(1)$ move $g$ to a different factor of $F$ with same degree in $x_1$. Using the orbit-stabilizer theorem we derive:
				\[[St_{A_m}(1):St_G(1)]=\frac{|A_{m-1}|}{|St_G(1)|}= \frac{(m-1)!}{2}\frac{|Or_G(1)|}{|G|},\]
				where $Or_G(1)$ is the orbit of $1$ under the action of $G$. Since the index of $G$ in $A_m$ is at least $m(m-1)$, then $|G|\leq \frac{m!}{2m(m-1)}$. This implies that
				\[[St_{A_m}(1):St_G(1)]\geq (m-1)|Or_G(1)|\geq m-1.\]			
				The latter implies $\deg_{x_1}(F)\geq 2m-2$, a contradiction.
                Finally, let $g$ be without $2$ or more variables, say $x_1$ and $x_2$. In this setting, we note that there are at least $2 \binom{m-2}{2}$ elements in the orbit of $g$ under $A_m$. We have at least the even permutations of the following form: $(1\quad i)(2\quad j)$ for $i,j\in \{3,\dots,m\}$, $i\ne j$. Since $m \geq 6$ and $\deg_{x_i}(F)\leq m$, we get a contradiction.
			\end{proof}
			Now if we consider the reduction modulo $g$, we get that:
			\begin{equation}\label{eq:modg}
				s_1v_m\equiv-s_2 \pmod{g}
			\end{equation}
			Let us now exclude that $s_2 \equiv 0 \pmod{g}$. First, observe that $g \neq c s_2$ for any $c \in \fq$ because otherwise it would be fixed by $A_m$. Therefore if $g$ were to divide $s_2$, we would have $s_2$ reducible and $g=\alpha+x_i$ for some $\alpha\in \fq$ and $i \in \{1,\dots,m\}$ by Remark \ref{rk:s1es2}. Since $g$ cannot divide $v_m$ this implies that $g$ divides $s_1$ and in turn, this forces $s_1$ and $s_2$ to be linearly dependent by Remark \ref{rk:s1es2}, a contradiction.
			Thus, $s_2 \not\equiv 0 \pmod{g}$. 
Without loss of generality we can suppose $\deg_{x_1}(g)=1$ and  $\deg_{x_i}(g)\leq 1$ for $i \in \{2,\dots,m\}$. We isolate $x_1$ from $g$ in the quotient ring $ \overline{\F}_q[x_1,\dots,x_m]/(g)$ obtaining $x_1\equiv \frac{h_1}{h_2}$ in $ \overline{\F}_q[x_1,\dots,x_m]/(g) $ (in other words, there is an natural isomorphism $ \overline{\F}_q[x_1,\dots,x_m]/(g) \rightarrow \overline{\F}_q[h_1/h_2,x_2,\dots,x_m]$), for some $h_1,h_2 \in \overline{\F}_q[x_2,\dots,x_m]$ such that $\deg_{x_i}(h_2)\leq 1$, and  $\deg_{x_i}(h_1)\leq 1$ for $i \in \{2,\dots,n\}$, and coprime.
			By Remark \ref{rk:sn-1}, we can write $s_1=x_1 p_1+p_2$ and $s_2=x_1 r_1+r_2$ where $p_1,p_2,r_1,r_2$ are linear combination of symmetric elementary polynomials in $x_2,\dots,x_m$. 
Therefore, since $\overline{\F}_q[x_1,\dots,x_n]/(g)$ 
can be embedded in $\overline{\F}_q(x_2,\dots x_n)$ thanks to the fact that the degree of $g$ in $x_1$ is $1$, Equation \eqref{eq:modg} becomes
\[\left(\frac{h_1}{h_2}p_1+p_2\right)\left(\frac{1}{h_2^{m-1}}\right)\left(\prod_{i=2}^m(h_1-h_2 x_i)\right)v_{m-1}=-\left(\frac{h_1}{h_2}r_1+r_2\right).\]
By multiplying both sides by $h_2^{m}$ we get
			\begin{equation}\label{eq:maineqlemma}
   \left(h_1p_1+h_2p_2\right)\left(\prod_{i=2}^m(h_1-h_2 x_i)\right)v_{m-1}=-h_2^{m-1}h_1r_1-h_2^{m}r_2.
   \end{equation}
Suppose that $h_2$ is not constant. Then, $h_2$ has an irreducible factor, say $u$. Now, at least $u^{m-1}$ divides the RHS above. The LHS, on the other hand, cannot be divisible by  $u^{m-1}$ for $m\geq 4$ as we now explain. Recall that $h_2$ is coprime to $h_1$. The factor $v_{m-1}$ is squarefree (so at most one power of $u$ divides it), the product in $i$ is coprime to $h_2$ (so no powers of $u$ can divide it), if $h_1p_1+h_2p_2$ is divisible by $u$ then $p_1$ is divisible by at least $u^{m-2}$ (which is a contradiction because factorizations of linear combinations of elementary symmetric polynomials are squarefree, as prescribed by Proposition \ref{pr:s}).

For the case in which $h_2$ is constant, it is enough to check the total degree of both sides of \eqref{eq:maineqlemma}. In fact, the RHS has total degree at most $2m-2$, while the LHS has total degree at least $(m-1)(m-2)/2+m$, a contradiction for $m\geq 6$.

		\end{itemize}

	\end{proof}
	
	Thanks to the previous lemma, we can use equation \eqref{estimate:Ghorpade-Lachaud} to bound $Z(F)$ and $Z(s_2)$. We have that $\deg (F)\leq\binom{m}{2}+m\leq m^2$ (for $m\geq 2$) and $\deg(s_2)\leq m$, hence
	\[|Z(F)| \leq q^{m-1}+ (m^2-1)(m^2-2)q^{m-3/2}+5m^{26/3}q^{m-2}\]
	and
	\[|Z(s_2)| \geq q^{m-1}- (m-1)(m-2)q^{m-3/2}-5m^{13/3}q^{m-2}. \]
	Note that we do not need $s_2$ irreducible to obtain the correspondent bound since if $s_2$ is reducible we can lower bound the number of zeros of any of its irreducible components, still obtaining a lower bound for the zeros of $s_2$ (and then we can upper bound the degree of its irreducible component with $m$, as it appears with negative sign).
	This implies that for $m\geq 4$
	\[|Z_D(F)| \leq m^4 q^{m-3/2}+ m^2 q^{m-3/2}+5m^{26/3}q^{m-2}+5m^{13/3}q^{m-2}+Z_D(s_2).\]
	
	In \cite{datta2023codes} the authors provided a sharp bound for the number of distinguished zeros of a symmetric polynomial obtained as a linear combination of elementary symmetric polynomials, that is 
	\[|Z_D(s_2)| \leq m{P}(q-1,m-1),\]
	which implies that 
	\begin{equation}\label{eq:b1}
		|Z_D(F)| \leq m^4 q^{m-3/2}+ m^2 q^{m-3/2}+5m^{26/3}q^{m-2}+5m^{13/3}q^{m-2}+m!\binom{q-1}{m-1},
	\end{equation}
	since 
	$$
	P(q,m)=
	\begin{cases}
		{q \choose m} m! \ \ \ \text{if} \ \ m \le q,\\
		0 \ \ \ \ \ \text{otherwise}.
	\end{cases}
	$$
	\subsection{Linearly dependent case}\label{subsec:lindep}
	Let $M:=\binom{m}{2}$ and $\gcd(M,q-1) = 1.$ Let $\mathbb{A}^m_D(\F_q)$ be the set of all distinguished points of $\F_q^m$, i.e. points with non-repeated coordinates in $ \fq $. We will show in this section that if $s_1$ and $s_2$ are linearly dependent, then   
	\[ |Z_{D}(F)| \leq \frac{|\mathbb{A}^m_D(\F_q)|}{q-1}+ mP(q-1,m-1). \] We begin with a few necessary lemmas for the proof of the above claim. 
	
	\begin{rmk}\label{lem:characterizeAD}
		Note that $x \in \mathbb{A}^m_D(\F_q)$ if and only if $v_m(x) \ne 0,$ and $v_m$ is surjective. In fact, $v_m(\lambda x) = \lambda^Mv_m(x)$ and the map  $\iota: \F_q^* \mapsto \F_q^*$ given by $\iota(x) = x^M$ is a bijection since we are assuming $\gcd(M,q-1)=1$.
	\end{rmk} 
	Our next goal is to show that there are two orthogonal partitions of $\mathbb{A}^m_{D}(\fq)$.
	
	\begin{lem}
		Let $\mathcal{P}_1$ be the partition determined by the pre-images of $v_m$. For every $z \in \mathbb{A}^m_D(\F_q)$, let $B_z :=  \{cz : c \in \F_q^* \}.$ Then the collection of sets $\mathcal{P}_2:=\{B_z : z \in \mathbb{A}^m_D(\F_q) \}$ is a partition of $\mathbb{A}^m_D(\F_q)$. In particular, $\mathcal{P}_1$ and $\mathcal{P}_2$  are orthogonal partitions and  $| v_m^{-1}(\lambda)|=|\mathcal{P}_2|$. 
	\end{lem}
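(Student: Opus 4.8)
The plan is to realize $\mathcal{P}_2$ as the orbit partition of a natural group action on $\mathbb{A}^m_D(\F_q)$, and then to leverage the standing hypothesis $\gcd(M,q-1)=1$ to see that the two partitions match up fiber by fiber. If $m>q$ the set $\mathbb{A}^m_D(\F_q)$ is empty and all assertions are vacuous, so throughout I would assume $m\le q$.

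First I would observe that $\F_q^{*}$ acts on $\mathbb{A}^m_D(\F_q)$ by scalar multiplication: if $z$ has pairwise distinct coordinates and $c\in\F_q^{*}$, then $cz_i-cz_j=c(z_i-z_j)\ne 0$, so $cz\in\mathbb{A}^m_D(\F_q)$. The orbit of $z$ under this action is exactly $B_z$, and since the orbits of a group action always partition the underlying set, $\mathcal{P}_2$ is a partition of $\mathbb{A}^m_D(\F_q)$; this settles the first claim.

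Next, by Remark \ref{lem:characterizeAD}, $v_m$ never vanishes on $\mathbb{A}^m_D(\F_q)$ and is surjective onto $\F_q^{*}$, so $\mathcal{P}_1=\{v_m^{-1}(\lambda):\lambda\in\F_q^{*}\}$. To prove orthogonality, I would fix $\lambda\in\F_q^{*}$ and a block $B_z\in\mathcal{P}_2$ and count $B_z\cap v_m^{-1}(\lambda)$. A point $cz\in B_z$ lies in $v_m^{-1}(\lambda)$ precisely when $v_m(cz)=c^{M}v_m(z)=\lambda$, i.e. when $c^{M}=\lambda/v_m(z)$; since $\gcd(M,q-1)=1$ the map $x\mapsto x^{M}$ is a bijection of $\F_q^{*}$, so this equation has a unique solution $c\in\F_q^{*}$. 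Hence every block of $\mathcal{P}_1$ meets every block of $\mathcal{P}_2$ in exactly one point, which is the asserted orthogonality.

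Finally, summing the identity $|B\cap v_m^{-1}(\lambda)|=1$ over all $B\in\mathcal{P}_2$ yields $|v_m^{-1}(\lambda)|=|\mathcal{P}_2|$ for each $\lambda\in\F_q^{*}$, proving the last claim (and, incidentally, that all fibers of $v_m$ on $\mathbb{A}^m_D(\F_q)$ have the same cardinality, whence $|\mathcal{P}_2|=|\mathbb{A}^m_D(\F_q)|/(q-1)$). The only points requiring a little care are that the scalar action stays inside $\mathbb{A}^m_D(\F_q)$ and that the $M$-th power map is bijective on $\F_q^{*}$; there is no genuine obstacle, since the lemma is essentially the coprimality hypothesis repackaged as a counting statement.
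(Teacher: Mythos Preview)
Your proof is correct and follows essentially the same approach as the paper: both use that $v_m(cz)=c^{M}v_m(z)$ together with the bijectivity of $x\mapsto x^{M}$ on $\F_q^{*}$ (from $\gcd(M,q-1)=1$) to pin down exactly one point of each $B_z$ in each fiber $v_m^{-1}(\lambda)$. Your phrasing via the orbit partition of the $\F_q^{*}$-action and the explicit solving of $c^{M}=\lambda/v_m(z)$ is in fact slightly cleaner than the paper's, since it directly establishes that every block meets every fiber (not just that the intersection has at most one point).
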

	\begin{proof}
		Note that either $B_x \cap B_y= \emptyset $ or $B_x=B_y$. In fact, there exists $ z \in B_x \cap B_y$ if and only if $z= \lambda_1 x= \lambda_2 y$, for non-zero elements $\lambda_1$ and $\lambda_2$, which implies that $x=\lambda_2/\lambda_1 y$, or equivalently, $B_x=B_y$. Hence $\mathcal{P}_2$ is a partition.
		
		Now it remains to show the orthogonality of the two partitions. Let $\lambda \in \fq^* $ and $x \in v_m^{-1}(\lambda)$. By definition, $ x \in B_x  \in \mathcal{P}_2$. For every $y=\lambda_1 x \in B_x$ with $\lambda_1\in \fq$ we obtain that if \[v_m(y)=\lambda \iff \lambda_1^M v_m(x)=\lambda \iff \lambda_1^M \lambda=\lambda  \iff \lambda_1=1 \iff x=y,\]
		since $\gcd(M,q-1)=1$.
		Thus, each element $x \in v_m^{-1}(\lambda)$ belongs to a unique set	 $B_x \in \mathcal{P}_2$, showing that the two partitions are orthogonal and that $| v_m^{-1}(\lambda)|=|\mathcal{P}_2|$. 
	\end{proof}

	\begin{thm} \label{thm: depcasegcd1}
		Let $M:=\binom{m}{2}$ and $\gcd(M,q-1)=1$. Let $F$ be a polynomial of the form given in equation \eqref{eq:F(x)}. If $s_1$ and $s_2$ are linearly dependent, we have that   
		\[ |Z_{D}(F)| \leq \frac{P(m,q)}{q-1}+ mP(q-1,m-1). \]
	\end{thm}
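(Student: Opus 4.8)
The idea is that linear dependence collapses $F$ into a product whose distinguished zeros can be counted factor by factor: one factor is a linear combination of elementary symmetric polynomials, controlled by the Datta--Johnsen bound, and the other is a fibre of the Vandermonde polynomial, controlled by the orthogonal-partition lemma established just above.

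First I would dispose of the degenerate cases. If $s_1=0$ then $F=s_2$ is itself a linear combination of elementary symmetric polynomials, so $|Z_D(F)|=|Z_D(s_2)|\le m P(q-1,m-1)$ by the Datta--Johnsen estimate \cite{datta2023codes}, which is already smaller than the claimed bound. Hence assume $s_1\ne 0$; linear dependence then forces $s_2=c\,s_1$ for a unique $c\in\fq$, so that $F=s_1\,(v_m+c)$. If moreover $c=0$, then $F=s_1 v_m$ and, since every distinguished point $x$ satisfies $v_m(x)\ne 0$ by Remark \ref{lem:characterizeAD}, the distinguished zeros of $F$ are exactly those of $s_1$, so again $|Z_D(F)|\le m P(q-1,m-1)$.

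It remains to treat $c\ne 0$. Here a distinguished point $x$ is a zero of $F$ precisely when $s_1(x)=0$ or $v_m(x)=-c$, and conversely any $x$ with $v_m(x)=-c$ is automatically distinguished since $-c\ne 0$ (Remark \ref{lem:characterizeAD}); hence $Z_D(F)=Z_D(s_1)\cup v_m^{-1}(-c)$ and $|Z_D(F)|\le |Z_D(s_1)|+|v_m^{-1}(-c)|$. The first term is at most $m P(q-1,m-1)$ by Datta--Johnsen. For the second, recall from Remark \ref{lem:characterizeAD} that $v_m$ is surjective onto $\fq^{*}$ and that $\mathbb{A}^m_D(\fq)=v_m^{-1}(\fq^{*})$, so the fibres $v_m^{-1}(\lambda)$, $\lambda\in\fq^{*}$, form a partition of $\mathbb{A}^m_D(\fq)$ into $q-1$ nonempty blocks; by the preceding lemma these blocks all have the same cardinality $|\mathcal{P}_2|$, whence $|v_m^{-1}(-c)|=|\mathcal{P}_2|=P(q,m)/(q-1)$. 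Adding the two bounds yields the claim.

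The argument is short once the machinery is in place; the only genuinely nontrivial ingredient is the equidistribution of the fibres of $v_m$ over $\fq^{*}$ (the content of the orthogonal-partition lemma, which is precisely where $\gcd(M,q-1)=1$ is used), and this is exactly what upgrades $P(q,m)/(q-1)$ from an estimate to an exact count. I do not anticipate a real obstacle; the only points requiring care are the small case analysis on $c$ and the observation that $v_m^{-1}(-c)$ already lies inside $\mathbb{A}^m_D(\fq)$, so no further intersection with the distinguished locus is needed.
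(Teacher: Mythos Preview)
Your argument is correct and follows essentially the same route as the paper: factor $F=s_1(v_m+c)$, bound $|Z_D(s_1)|$ via Datta--Johnsen, and compute $|v_m^{-1}(-c)|$ exactly using the orthogonal-partition lemma. Your treatment is in fact slightly more careful than the paper's, which writes $s_2=\lambda s_1$ with $\lambda\in\fq^*$ without explicitly disposing of the degenerate cases $s_1=0$ and $c=0$ that you handle separately.
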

	\begin{proof}
		Suppose that $s_1$ and $s_2$ are dependent. Then, $s_2 = \lambda s_1$ for some $\lambda \in \f_q^*.$ Hence, we can write $F = s_1(x)v_m(x) + s_2(x) = s_1(x)v_m(x) + \lambda s_1(x) = (v_m(x)+\lambda)(s_1(x)).$ We have from \cite{datta2023codes} that $Z_D(s_1) \leq mP(q-1,m-1),$ and so it remains to show a bound for the distinguished zeroes of $v_m(x)+\lambda.$ Observe that this is the same as finding the largest set in $\{|v_m^{-1}(c)| : c\in \F_q^*\}$ since $v_m(x) + \lambda = 0 \iff v_m(x) = -\lambda$. By the above lemma, we know that $| v_m^{-1}(\lambda)|=|\mathcal{P}_2|$ for every $\lambda \in \fq^*$. Observe that each $B_z \in \mathcal{P}_2$ covers $q-1$ distinct points in $\mathbb{A}^m_D(\F_q)$. So, $|\mathcal{P}_2| = \frac{|\mathbb{A}^m_D(\F_q)|}{q-1}$. Hence, we have that $|v_m^{-1}(\lambda)| = \frac{|\mathbb{A}^m_D(\F_q)|}{q-1}$ for every $\lambda \in \fq^*$, that is $v_m(x) = \lambda$ on exactly $\frac{|\mathbb{A}^m_D(\F_q)|}{q-1}$ many points. In conclusion, $|Z_D(F)| \leq |Z_D(v_m+\lambda)|+ |Z_D(s_1)|  = \frac{|\mathbb{A}^m_D(\F_q)|}{q-1}+ mP(q-1,m-1).$ 
	\end{proof}
	The case for $ \gcd(M,q-1) > 1$ ($M:=\binom{m}{2}$) is more complicated. We cannot use anymore that the map $\iota(x)=x^M$ is a bijection. This is why the bound on the number of zeros of $v_m+\lambda$ for $\lambda\ne0$ is not sharp anymore. However, by using another argument we were still able to prove a generalization of the previous bound also for $\gcd(M,q-1) > 1$, which we decided to separate from the Theorem \ref{thm: depcasegcd1}, which is instead sharp.
	\begin{thm}\label{thm:gcdnot1}
		Let $M:=\binom{m}{2}$ and $d := \gcd(M,q-1) > 1$. Let $F$ be a polynomial of the form given in equation \eqref{eq:F(x)}. If $s_1$ and $s_2$ are linearly dependent, we have that   
		\begin{equation}\label{eq:b2}
			|Z_{D}(F)| \leq \frac{P(q,m)}{q-1}d+ mP(q-1,m-1). 
		\end{equation}
	\end{thm}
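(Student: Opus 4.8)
The plan is to reuse the factorization of $F$ forced by linear dependence and then estimate the fibers of the Vandermonde polynomial by the same orbit argument as in Theorem~\ref{thm: depcasegcd1}, only now keeping track of the fact that the $M$-th power map on $\fq^*$ is $d$-to-one onto its image rather than bijective. If $s_1=0$ then $F=s_2$ is a linear combination of elementary symmetric polynomials, and the Datta--Johnsen bound $|Z_D(s_2)|\le mP(q-1,m-1)$ already gives \eqref{eq:b2}; similarly, if $s_2=0$ then on $\mathbb{A}^m_D(\fq)$ the factor $v_m$ never vanishes, so $|Z_D(F)|=|Z_D(s_1)|\le mP(q-1,m-1)$. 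So assume $s_1,s_2\neq0$; linear dependence then gives $s_2=\lambda s_1$ with $\lambda\in\fq^*$, whence $F=(v_m(x)+\lambda)\,s_1(x)$ and
\[
|Z_D(F)|\le |Z_D(v_m+\lambda)|+|Z_D(s_1)|\le |Z_D(v_m+\lambda)|+mP(q-1,m-1) .
\]
It remains to show $|Z_D(v_m+\lambda)|=|v_m^{-1}(-\lambda)|\le d\,P(q,m)/(q-1)$, where the first equality holds because $v_m(x)=-\lambda\neq0$ forces $x$ to be distinguished.

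For this I would use the partition $\mathcal{P}_2=\{B_z:z\in\mathbb{A}^m_D(\fq)\}$ with $B_z=\{cz:c\in\fq^*\}$ introduced in the lemma preceding Theorem~\ref{thm: depcasegcd1}: each block has exactly $q-1$ elements, so $|\mathcal{P}_2|=|\mathbb{A}^m_D(\fq)|/(q-1)=P(q,m)/(q-1)$. Fix a block $B_z$. Since $v_m(cz)=c^Mv_m(z)$, a point $cz$ lies in $v_m^{-1}(-\lambda)$ precisely when $c^M=-\lambda/v_m(z)$, and in the cyclic group $\fq^*$ the equation $c^M=\beta$ has either no solution or exactly $d=\gcd(M,q-1)$ solutions, namely the size of the kernel of $c\mapsto c^M$. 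Hence every block contributes at most $d$ points to $v_m^{-1}(-\lambda)$, and summing over the $|\mathcal{P}_2|$ blocks yields $|v_m^{-1}(-\lambda)|\le d\,|\mathcal{P}_2|=d\,P(q,m)/(q-1)$. Plugging this into the displayed inequality gives \eqref{eq:b2}.

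I do not anticipate a genuine obstacle here: the argument is elementary group theory applied fiberwise to $v_m$, combined with the already-cited symmetric-polynomial bound. The one point worth flagging is why this estimate is no longer sharp, unlike Theorem~\ref{thm: depcasegcd1}: when $d>1$ the scalar $v_m(z)$ only pins down the coset of the orbit's $v_m$-values in the index-$d$ subgroup $(\fq^*)^M$, so the blocks $B_z$ distribute their values among $d$ cosets of that subgroup, and bounding by $d\,|\mathcal{P}_2|$ corresponds to the (possibly unattainable) extreme in which every block lands in the coset of $-\lambda$. A careful write-up should also dispose of the degenerate cases $s_1=0$ and $s_2=0$ separately, as above, but these only make the left-hand side smaller.
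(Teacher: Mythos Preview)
Your proof is correct and follows essentially the same approach as the paper: both reduce to bounding $|v_m^{-1}(-\lambda)|$ by observing that each scalar-multiple block $B_z$ of size $q-1$ can contribute at most $d=\gcd(M,q-1)$ points to the fiber, yielding the bound $d\cdot P(q,m)/(q-1)$. Your per-block counting is in fact a slightly cleaner packaging of the paper's argument (which instead groups fiber elements into $d$-element sets $S_x$ and shows distinct such sets live in disjoint blocks), and you are more careful than the paper in explicitly disposing of the degenerate cases $s_1=0$ and $s_2=0$.
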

	\begin{proof}
        As in Theorem \ref{thm: depcasegcd1} it is only needed to show a bound for the distinguished zeroes of $v_m(x)+\lambda.$
		Let $\lambda \in \F_q^*$. Observe that there are $d$ solutions in $\F_q^*$ to the equation $\lambda^M = 1$; in fact, if $\xi$ is a primitive element of $\F_q^*,$ then the set $S = \{1,\xi^\frac{q-1}{d},\xi^\frac{2(q-1)}{d}, \cdots, \xi^\frac{(d-1)(q-1)}{d}\}$ is the set of the solutions to the latter equation. This means that for any $x \in v_m^{-1}(\lambda)$, the elements $\xi^\frac{q-1}{d}x,\xi^\frac{2(q-1)}{d}x, \cdots, \xi^\frac{(d-1)(q-1)}{d}x$ are also in $v_m^{-1}(\lambda).$ Denote by $S_x$ the set $\{sx : s\in S\},$ and let $B_x = \{\lambda x : \lambda \in \F_q^*\}$. As we saw before, each $B_x$ covers $q-1$ distinct elements and $S_x \subset B_x$.
		Let $x,y \in v_m^{-1}(\lambda)$ such that $y \notin S_x$. We claim that $B_x \cap B_y = \emptyset$. In fact if there were $\lambda_x, \lambda_y \in \fq$ such that $\lambda_y y =\lambda_x x$, then \[\lambda_y y =\lambda_x x \implies \lambda_y^M v_m(y)=\lambda_x^M v_m(x) \implies \lambda_y^M=\lambda_x^M \implies \frac{\lambda_x}{\lambda_y} \in S \mbox{ and } \frac{\lambda_y}{\lambda_x} y = x,\] which is in contradiction with $y \notin S_x$.
		
		Finally observe that there are at most $t := \frac{|\mathbb{A}^m_D(\F_q)|}{q-1}$ distinct points $z_1, z_2, \dots, z_t \in v_m^{-1}(\lambda)$ such that $B_{z_1}, B_{z_2},\dots, B_{z_t}$ are all disjoint; in fact each set contains $q-1$ distinct points in $\mathbb{A}^m_D(\F_q)$ and $|\bigcup_{i=1}^t B_{z_i}| = t(q-1) = |\mathbb{A}^m_D(\F_q)|$.  Since for each of those $z_i$'s there are $d$ elements in $v_m^{-1}(\lambda)$ (corresponding to the elements in $S_{z_i}$), we derive that $|v_m^{-1}(\lambda)| \leq td = \frac{|\mathbb{A}^m_D(\F_q)|}{q-1}d$. Now we conclude as in the proof of Theorem \ref{thm: depcasegcd1}.
	\end{proof}

	\begin{proof}[Proof of the main Theorem \ref{th:main}]
	We obtained the following bounds respectively for the linear independent case and linearly dependent case:
	\[|Z_D(F)| \leq m^4 q^{m-3/2}+ m^2 q^{m-3/2}+5m^{26/3}q^{m-2}+5m^{13/3}q^{m-2}+m!\binom{q-1}{m-1},\]
	and
	\[	|Z_{D}(F)| \leq \frac{P(q,m)}{q-1}d+ m!\binom{q-1}{m-1}. \]
	By comparing the different terms of the two equations, that is 
\[\frac{P(q,m)}{q-1}d=q(q-2)\cdots(q-m+1)d, \mbox{ and } m^4 q^{m-3/2}+ m^2 q^{m-3/2}+5m^{26/3}q^{m-2}+5m^{13/3}q^{m-2}, \]
we derive that for $q\geq m^{10}$ and $m\geq6$, we have to take the bound of \eqref{eq:b2}. Thus, we obtain the claim since the RHS of both bounds are increasing functions in $m$ and the bound \eqref{eq:b2} is asymptotically larger.
	\end{proof}

\begin{rmk}
It is out of the scopes of this paper to work out the cases $m\leq 6$, or $q<m^{10}$ which is a relevant but technical task, which we leave to the interested reader.
\end{rmk}

	\section{Construction of Codes from $A_m$-invariant Polynomials} \label{sec:constrcode}
	\subsection{Construction} In this last section we show how to construct linear codes from $A_m$-invariant polynomials. Let $m \in \mathbb{N}$ be large enough such that $\gcd(m,q-1)=1$, let $\sigma_m^i$ the $i$-th elementary symmetric polynomial in $m$ variables and let 
	
	\begin{equation}\label{code:C}
	\Sigma_m := \left\{ s_1 + v_m s_2 : s_1=\sum_{i=0}^ma_i \sigma_{m}^i, \, s_2=\sum_{i=0}^mb_i \sigma_{m}^i, \, a_i,b_i \in \fq, \, \forall i \in \{0,\dots,m\} \right\}.
	\end{equation}

	  Let $\mathbb{A}^m_D(\F_q)$ be the set of all distinguished points in $\F_q^m.$ Consider the group action $\phi: A_m \times \mathbb{A}^m_D(\F_q) \to \mathbb{A}^m_D(\F_q)$ defined by $\phi(\sigma,P)=P_\sigma$, where if $P=(x_1,\dots,x_m)$ then $P_\sigma:=(x_{\sigma(1)},\dots, x_{\sigma(m)})$.
	The points of $\mathbb{A}^m_D(\F_q) $ constitute a disjoint union of orbits under the action $\phi$, and each orbit has cardinality $m!/2$. Thus, we can define a code by evaluating the polynomials in $\Sigma_m$ on a smaller evaluation set, consisting of one point from each of the $A_m$ orbits mentioned before.
	Let $n=2{q \choose m}$, and let $P_1,\dots,P_n$ be a set of representatives, one from each orbit.
Consider the evaluation map $\ev: \Sigma_m \to \F_q^n$ given by \[\ev(F) := (F(P_1),F(P_2),\dots, F(P_n)).\] Then, we define $C := \ev(\Sigma_m).$
\begin{prop}\label{prop:C}
	For $q\geq m^{10}$ and $m\geq6$, $C$ is a linear code with length $n = 2{q \choose m},$ dimension $k = 2(m+1),$ and distance $d \geq n - \left(2\frac{{q \choose m}}{q-1} + 2{q-1 \choose m-1}\right)$. 
\end{prop}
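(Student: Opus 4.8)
The plan is to verify the three parameters — length, dimension, distance — in that order, with the distance bound being the substantive part that rests on Theorem~\ref{th:main}.

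\textbf{Length.} This is immediate from the construction: $\mathbb{A}^m_D(\F_q)$ has $P(q,m) = \binom{q}{m}m!$ points (this requires $m \le q$, which holds since $q \ge m^{10} \ge m$ for $m \ge 6$), the action $\phi$ of $A_m$ is free on distinguished points — if $\sigma$ fixes a point with pairwise-distinct coordinates then $\sigma = \mathrm{id}$ — so every orbit has size $|A_m| = m!/2$, and hence the number of orbits is $P(q,m)/(m!/2) = 2\binom{q}{m} = n$. Thus $\ev$ takes values in $\F_q^n$.

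\textbf{Dimension.} I would argue that $\ev$ is injective on $\Sigma_m$, so that $k = \dim_{\F_q}\Sigma_m$, and then that $\dim_{\F_q}\Sigma_m = 2(m+1)$. For injectivity: suppose $F \in \Sigma_m$ vanishes on all orbit representatives $P_1,\dots,P_n$; since each $F \in \Sigma_m$ is $A_m$-invariant (the $\sigma_m^i$ are symmetric, $v_m$ is alternating, and a product of an alternating polynomial with a symmetric one is alternating, but actually $v_m s_2$ is alternating while $s_1$ is symmetric, so $F$ itself is \emph{not} in general $A_m$-invariant — here one must instead observe that $|F|$ on an orbit is constant up to the sign of $v_m$, so $F(P_i) = 0$ forces $F \equiv 0$ on the whole orbit), $F$ vanishes on all of $\mathbb{A}^m_D(\F_q)$, i.e. $|Z_D(F)| = P(q,m)$. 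But if $F \ne 0$ then $(s_1,s_2) \ne (0,0)$, and after the (harmless) reduction to the case where $F$ is not the zero polynomial, Theorem~\ref{th:main} gives $|Z_D(F)| \le \frac{P(q,m)}{q-1}d + mP(q-1,m-1)$, which for $q \ge m^{10}$, $m\ge 6$ is strictly smaller than $P(q,m)$; contradiction. Hence $F = 0$ as a polynomial, and since the $2(m+1)$ polynomials $\sigma_m^0,\dots,\sigma_m^m, v_m\sigma_m^0,\dots,v_m\sigma_m^m$ are $\F_q$-linearly independent in $\F_q[x_1,\dots,x_m]$ (the $\sigma_m^i$ are algebraically, in particular linearly, independent, and multiplication by $v_m \notin \F_q(\sigma_m^1,\dots,\sigma_m^m)$ together with the uniqueness in Lemma~\ref{lm:factor} separates the two blocks), we get $\dim \Sigma_m = 2(m+1) = k$.

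\textbf{Distance.} Since $C$ is linear, $d = \min_{0 \ne F \in \Sigma_m} \wt(\ev(F))$, and $\wt(\ev(F)) = n - |\{i : F(P_i) = 0\}|$. By the $A_m$-invariance of the zero set of $F$ on distinguished points (again: $F(P) = 0 \iff F(P_\sigma) = 0$ for $\sigma \in A_m$, because on a distinguished point $v_m$ is nonzero and $F(P_\sigma) = \pm s_1(P) v_m(P) + s_2(P)$ — one should be slightly careful here and note that the number of orbit-representatives at which $F$ vanishes equals $|Z_D(F)|/(m!/2)$ only if the vanishing locus is a union of full orbits, which it is), we have $|\{i : F(P_i) = 0\}| = |Z_D(F)| \cdot \frac{2}{m!}$. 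Applying Theorem~\ref{th:main} to any nonzero $F$,
\[
|Z_D(F)| \le \frac{P(q,m)}{q-1}d' + mP(q-1,m-1), \qquad d' := \gcd\!\Big(\tbinom{m}{2}, q-1\Big),
\]
and dividing by $m!/2$, using $P(q,m) = \binom{q}{m}m!$ and $P(q-1,m-1) = \binom{q-1}{m-1}(m-1)!$, gives $|\{i : F(P_i) = 0\}| \le \frac{2}{q-1}\binom{q}{m}d' + 2\binom{q-1}{m-1}$. Since $d' \le 1$ is false in general — wait, $d' \ge 1$ always — one bounds $d' \ge 1$ the wrong way; instead note the stated distance bound drops the factor $d'$, so strictly one proves $d \ge n - (\frac{2}{q-1}\binom{q}{m}d' + 2\binom{q-1}{m-1})$ and then remarks this is $\ge n - (\frac{2}{q-1}\binom{q}{m} + 2\binom{q-1}{m-1})$ \emph{only when $d' = 1$}. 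So the honest statement matching the proposition requires either $d' = 1$ or reinterpreting; I would simply carry the $d'$ through and note $d' = 1$ is the generic/intended case, or state the bound with $d'$ included.

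\textbf{Main obstacle.} The genuinely delicate point is not any single estimate but the bookkeeping that converts the polynomial-level bound $|Z_D(F)|$ from Theorem~\ref{th:main} into a bound on the number of \emph{orbit representatives} where $F$ vanishes: one must check that $\{P \in \mathbb{A}^m_D(\F_q) : F(P) = 0\}$ is genuinely a union of $A_m$-orbits (equivalently that $F$ vanishing at $P$ is an $A_m$-invariant condition on distinguished points), which uses that $v_m(P) \ne 0$ and a sign-chase on $v_m s_2$; and secondly the reconciliation of the $\gcd$ factor $d'$ with the clean form stated in Proposition~\ref{prop:C}. Everything else — the freeness of the action, the linear independence of the $2(m+1)$ generators, and the numerical comparison showing the zero-count bound is below $P(q,m)$ — is routine given the results already established.
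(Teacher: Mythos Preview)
Your approach matches the paper's: orbit counting for the length, the direct-sum decomposition $\Span\{\sigma_m^i\}\oplus\Span\{v_m\sigma_m^i\}$ for the dimension, and dividing the zero-count bound by the orbit size $m!/2$ for the distance. The paper is in fact terser than you: it does not separately argue injectivity of $\ev$ (your addition there is a genuine improvement), and for the distance it invokes Theorem~\ref{thm: depcasegcd1} (the $\gcd=1$ case) directly, consistent with the coprimality hypothesis imposed at the start of the construction section. Your observation that the general bound carries the factor $d'=\gcd(\binom{m}{2},q-1)$ and that the clean form in the proposition requires $d'=1$ is exactly right and explains why that hypothesis is there.

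The one real confusion in your write-up is the $A_m$-invariance discussion. You correctly begin by asserting that each $F\in\Sigma_m$ is $A_m$-invariant, then talk yourself out of it: ``$v_m s_2$ is alternating while $s_1$ is symmetric, so $F$ itself is not in general $A_m$-invariant.'' This is wrong. ``Alternating'' means $v_m$ changes sign under \emph{odd} permutations; for every $\sigma\in A_m$ one has $\sigma(v_m)=v_m$ on the nose, hence $\sigma(F)=F$ exactly, and $F$ is literally constant on each $A_m$-orbit --- no sign-chase, no ``up to sign,'' no appeal to $v_m(P)\neq 0$. Your formula $F(P_\sigma)=\pm s_1(P)v_m(P)+s_2(P)$ is therefore incorrect for $\sigma\in A_m$ (the sign is always $+$); the $\pm$ would only arise under an odd permutation, which is not what is acting here. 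Once this is straightened out, both the injectivity of $\ev$ and the fact that $Z_D(F)$ is a union of full $A_m$-orbits become immediate one-liners, and what you flagged as the ``main obstacle'' simply evaporates.
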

	 
	\begin{proof}
		The length of $C$ equals the number of orbits of $\mathbb{A}^m_D(\F_q)$ under the action of $A_m$. Note that $|\mathbb{A}^m_D(\F_q)| = P(q,m),$ and that we partitioned $|\mathbb{A}^m_D(\F_q)|$ using orbits of size $\frac{m!}{2}.$ So, the number of orbits is $2 {q \choose m}.$ Hence, $n = 2 {q \choose m}.$ Now, we show that $k = 2(m+1).$ Consider the set $S = \{\sigma_m^0,\sigma_m^1,\dots, \sigma_m^m\}$ where $\sigma_m^i$ is the $i^{th}$ symmetric polynomial in $m$ variables. In \cite{datta2023codes} it is shown that the elements in $S$ are linearly independent. Observe that $v_m S := \{ v_m s : s \in S\}$ is a $\fq$-linearly independent set of $m+1$ polynomials. Since we have $\Span \{S\}\cap \Span\{v_mS\}=0$, then $\Sigma_m = \Span\{ S \} \oplus \Span \{v_m S\},$ and this is a vector space of dimension $2(m+1).$ Finally, let $F_{max} \in \Sigma_m$ be such that $\displaystyle{|Z_D(F_{max})| = \max_{f \in \Sigma_m}|Z_D(f)|}.$ Observe that just like $\mathbb{A}^m_D(\F_q),$ $Z_D(F_{max})$ can be partitioned by orbits of size $m!/2$, and so the maximum number of coordinates equal to $0$ that a codeword could have is $2|Z_D(F_{max})|/m!$. Hence by Theorem \ref{thm: depcasegcd1}, \[d = n - \frac{2|Z_D(F_{max})|}{m!} \geq n - \left(2\frac{{q \choose m}}{q-1} + 2{q-1 \choose m-1}\right).\]
	\end{proof}
	
	\begin{rmk}
	    
	    Even if our result relies on the Hasse-Weil theorem for large values of $q$, using Sage \cite{sagemath}, it is easy to check that our codes maintain the same parameters also for small values of $q$, provided that $q\geq m-1 \geq 5$.
        The reason is that the bound obtained for the linearly dependent case does not require any asymptotic assumption and that is the case when the set of zeros for our family of polynomials has the largest cardinality. 
	\end{rmk}

	\subsection{Asymptotic Comparisons with other codes}
	In this subsection, we investigate the relative distance $\delta_C$ and rate $\rho_C$ our code $C$ described in Proposition \ref{prop:C} by comparing it to the closest (in terms of regime of parameters) available constructions. In particular, our codes and Datta-Johnsen codes achieve better asymptotic parameters than Generalized Reed-Muller codes.
\subsubsection{Datta-Johnsen codes from symmetric polynomials}	In \cite{datta2023codes}, the authors constructed a code $C^\prime$ with length $n^\prime = {q \choose m}$, dimension $k^\prime = m+1$, and distance $d^\prime = {q \choose m} - {q-1 \choose m-1}$. The length and dimension of $C$ are twice the length and dimension of $C^\prime$, respectively. It can be shown that for fixed $m$ the relative distance of $C$ and $C^\prime$ are asymptotically equal as $q$ grows. That is, \[\lim_{q \to \infty} \frac{\delta_C}{\delta_{C'}}=1.\]
These considerations imply that for for a fixed $q$ and the same information rate, our codes have double the distance.
\subsubsection{Generalized Reed-Muller codes}	In addition to that, it makes sense to compare our code to the Generalized Reed-Muller code \eqref{code:rm} for $t=m$, where $t$ is the degree of the polynomials and $m$ is the number of variables. In this case, we observe that while we get asymptotically the same relative distance, our code $C$ provides asymptotically a better rate; for example, for $q$ being the next prime power after $m^{10}$, $\rho_{RM}\sim \binom{2m}{m}/(m^{10m})$ and $\rho_C\sim m/\binom{m^{10}}{m}$, and
	\[\lim_{m\to \infty}\frac{\rho_C}{\rho_{RM}}=\infty\]

	\section{Future Work} It should be possible to extend the ideas used in this paper and \cite{datta2023codes} to create codes from arbitrary subgroups of $S_m$ (the symmetric group of $m$ variables). We briefly outline the strategy. Let $x_1, x_2, \dots, x_m$ be variables and let $H$ be a subgroup of size $N$ of the symmetric group $S_m$. Let $K = \F_q(s_1,s_2,\dots, s_m)$ where $s_i$ represents the $i^{th}$ elementary symmetric polynomial. Let $L = \F_q(x_1,x_2,\dots,x_m)$. Denote $L^H$ as the set of polynomials in $L$ fixed by $H.$ By the fundamental theorem of Galois Theory, the degree of the field extension $L^H / K$ is equal to $|H| = N.$ By the definition of degree of a field extension, this means that $\exists f_1, f_2, \dots, f_N \in L^H$ such that $L^H = f_1K + f_2K + \cdots +f_NK.$ We can construct linear codes similarly to how we proceed in this paper: let $H$ act on the set $\mathbb{A}^m_D(\F_q)$ and create codewords by evaluating a polynomial in $L^H$ at a distinct representative of each orbit. Their length $n$ should be $N {q \choose m}$, dimension $N(m+1),$ and distance is expected to be roughly $\displaystyle{n - \frac{N}{m!}\max_{f \in L^H}Z_D(f) }$. 

 Another question is whether it is possible to improve the bound of in Theorem  \ref{thm:gcdnot1} (the bound in Theorem \ref{thm: depcasegcd1} is instead sharp).
 
Finally, it would be very interesting to improve the bounds at the end of Section \ref{subsec:linindep} by using geometric properties of the varieties arising in the counting argument. In particular, Theorem \ref{th:main} only gives a regime of parameters in which our codes are guaranteed to exist: it would be very interesting to see if it is possible to relax the conditions on $q$ and $m$ with more advanced counting techniques.

	\bibliographystyle{abbrv}
	\bibliography{bib_sym.bib}

\end{document}